\pdfoutput=1
\PassOptionsToPackage{table}{xcolor}
\documentclass[sigconf]{acmart}

\usepackage{amsmath}
\usepackage{booktabs}
\usepackage{multirow}
\usepackage{graphicx}
\usepackage{xcolor}
\usepackage{enumitem}
\usepackage{balance}

\newcommand{\bestcell}[1]{\cellcolor{gray!15}\textbf{#1}}

\AtBeginDocument{}
\theoremstyle{definition}
\newtheorem{defn}{Definition}

\copyrightyear{2026}
\acmYear{2026}
\setcopyright{cc}
\setcctype{by}
\acmConference[CIKM '26]{Proceedings of the 35th ACM International Conference on Information and Knowledge Management}{November 7--11, 2026}{Rome, Italy.}
\acmBooktitle{Proceedings of the 35th ACM International Conference on Information and Knowledge Management (CIKM '26), November 7--11, 2026, Rome, Italy}
\acmISBN{979-8-4007-2539-5/2026/11}
\acmDOI{10.1145/3799682.3841132}
\begin{document}

\title{The Recall Ceiling of LLM Recommendation Reranking}

\author{Zhaohui Wang}
\orcid{0009-0006-1187-1903}
\affiliation{%
  \institution{University of Southern California}
  \department{Viterbi School of Engineering}
  \city{Los Angeles}
  \state{California}
  \country{USA}}
\email{zwang000@usc.edu}

\renewcommand{\shortauthors}{Zhaohui Wang}

\begin{abstract}
Some LLM-based recommendation rerankers are evaluated under an \emph{oracle} protocol that guarantees the ground-truth item is present in the scored set, either by injecting it into the candidate list or by scoring it against sampled negatives.
We show that this protocol overestimates realistic performance by 92--95\% in NDCG@10 across three primary Amazon datasets, so oracle results do not transfer to deployment.
The root cause is a \emph{recall ceiling}: realistic retrieval covers only 2--19\% of relevant items at $K{=}100$ across eight datasets in three domains (Amazon products, MovieLens movies, MIND news), placing a deterministic upper bound on any closed-candidate reranker's top-$k$ NDCG (Theorem~\ref{thm:ceiling}, $\mathbb{E}[\text{NDCG@}k] \leq \text{Recall@}|W_\pi|$ under LOO evaluation, where $W_\pi$ is the reranker's own candidate window).

Under realistic retrieval, every optimisation strategy we tested fails to produce a statistically significant improvement over the CF baseline on our primary Amazon datasets: prompt engineering, model scaling across a 168$\times$ parameter range, sequential models, three supervised neural rerankers (LambdaMART, RankNet, MLP) under a disjoint train/eval user split, a LoRA-fine-tuned LLaMA-3.2-3B reranker, and a hybrid CF+BM25+Dense retrieval with reciprocal-rank fusion.
Adding text-aware retrieval raises recall on Beauty but does not lift end-to-end NDCG.
Giving the LLM the upstream CF ranks and scores directly---the fairest setting we can construct---narrows but does not close the gap to CF, and does so by making the LLM adhere more closely to the CF order rather than by adding information; an LLM$+$CF fusion tuned on the evaluation users themselves recovers CF's performance and no more.
We propose the Recall-Aware Evaluation Protocol (RAEP) as a diagnostic: classify the regime by retrieval recall, then evaluate reranking only where the ceiling permits differentiation.
In the retrieval regimes we can measure (Recall@100 of 2--19\%), the path to effective LLM recommendation runs through retrieval improvement rather than reranker sophistication; we do not claim this ordering holds at the higher recall, richer feature sets, and online feedback available to production systems.

\end{abstract}

\begin{CCSXML}
<ccs2012>
<concept><concept_id>10002951.10003317.10003347.10003350</concept_id>
<concept_desc>Information systems~Recommender systems</concept_desc>
<concept_significance>500</concept_significance></concept>
<concept><concept_id>10002951.10003317.10003338</concept_id>
<concept_desc>Information systems~Retrieval models and ranking</concept_desc>
<concept_significance>500</concept_significance></concept>
</ccs2012>
\end{CCSXML}

\ccsdesc[500]{Information systems~Recommender systems}
\ccsdesc[500]{Information systems~Retrieval models and ranking}

\keywords{Large Language Models; Recommendation Systems; Reranking; Recall Ceiling; Evaluation Methodology; Collaborative Filtering}

\maketitle

\begin{figure*}[t]
  \centering
  \includegraphics[width=0.70\textwidth]{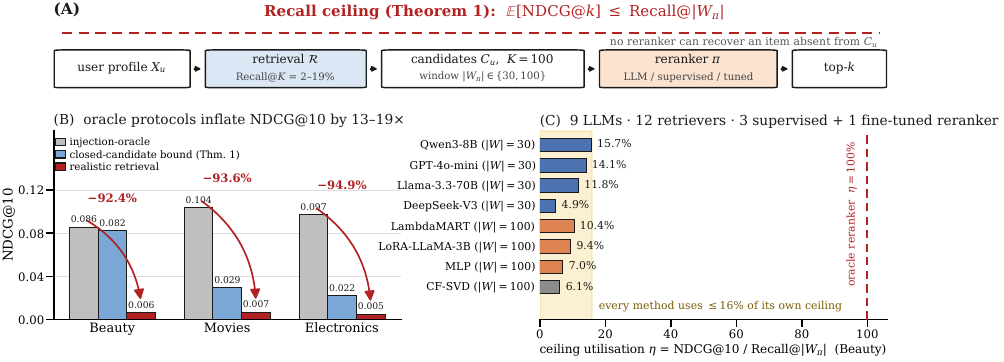}
  \caption{\textbf{Overview.} (A) The retrieve-rerank pipeline; Theorem~\ref{thm:ceiling} caps any closed-candidate reranker's top-$k$ NDCG by the recall of its own window $W_\pi$. (B) Injection-oracle NDCG@10, the closed-candidate bound, and realistic retrieval on the three primary datasets. (C) On Beauty, every reranker we test---zero-shot LLMs, supervised models, and a LoRA-fine-tuned LLaMA-3.2-3B---exploits under $16\%$ of its own ceiling, and none beats CF significantly. Generated by \texttt{scripts/make\_fig\_overview.py} from the released result JSONs.}\label{fig:overview}
\end{figure*}

\section{Introduction}\label{sec:introduction}

Recent work applies LLMs to recommendation through zero-shot ranking, LLM-augmented item text, instruction tuning, and generative retrieval~\cite{hou2024large,hanjia2024llmrec,bao2023tallrec,li2023gpt4rec}, borrowing prompting techniques from the wider LLM literature~\cite{wei2022chain,zhou2022large}. Some of these evaluations report gains under a protocol in which the held-out relevant item is guaranteed to sit in the scored set: an \emph{oracle evaluation protocol} that injects the ground truth into the candidate list before reranking~\cite[\S3.1]{hou2024large}, or a sampled-negative evaluation that scores the positive against a fixed pool of non-interacted items~\cite{hanjia2024llmrec}. We audit which published protocols do and do not have this property in \S\ref{sec:related}, and we are explicit that several prominent works---TALLRec~\cite{bao2023tallrec}, which classifies a single target item, and GPT4Rec~\cite{li2023gpt4rec}, which retrieves from the full corpus---are \emph{not} subject to this critique. In production no injection occurs: candidates come from upstream retrieval, and what the reranker receives is far worse than the oracle implies. On our three primary Amazon datasets, the LLM reranker's headline NDCG@10 drops by \textbf{92--95\%} between the two regimes. This paper asks where that drop comes from, and what it implies for the design of LLM-based recommendation.

\textbf{A mechanism.} A reranker cannot rank an item it does not see. Theorem~\ref{thm:ceiling} (\S\ref{sec:theory}) formalises this as a tight combinatorial bound: for any closed-candidate reranker, $\mathbb{E}[\text{NDCG@}k] \leq \text{Recall@}|W_\pi|$ under LOO evaluation---where $W_\pi$ is the window of candidates the reranker actually sees, which need not be all $K$---with a multi-positive specialisation that bounds the metric by the recovered-set IDCG ratio. At the realistic Recall@100 we measure across eight datasets (2--19\%), this collapses the achievable NDCG@10 to at most $0.02$--$0.19$. The ceiling is binding before any reranker decision is made.

\textbf{A double failure.} The ceiling explains why headline numbers shrink, but not why \emph{no} reranker we test significantly outperforms a $\$0.00$ CF baseline on our primary Amazon datasets. To separate the two effects, we measure the \emph{ceiling-utilisation ratio} $\eta$: empirical NDCG@10 divided by the closed-candidate upper bound from Theorem~\ref{thm:ceiling}, computed against each reranker's own candidate window (\S\ref{sec:tightness}). On Beauty, $\eta$ stays below $16\%$ for every method we test: $4.9$--$15.7\%$ for the zero-shot LLMs, $7.0$--$10.4\%$ for the supervised rerankers and the fine-tuned LLM. Recall sets a low ceiling; the reranker exploits only a small fraction of even that ceiling. Investment in better prompts, larger models, and richer training signals is investment in $\eta$, and $\eta$ is empirically small under realistic retrieval.

\textbf{A falsification.} We test this thesis under maximal stress along five dimensions: model capacity (4B to 671B, plus the April-2026 DeepSeek-V4 reasoning family with ``thinking'' mode), training signal (zero-shot, three prompt strategies, three supervised rerankers, a LoRA-fine-tuned LLaMA-3.2-3B), retrieval quality (twelve methods including a hyper-parameter-tuned LightGCN, multi-source RRF, and a hybrid CF$+$BM25$+$Dense retriever), evaluation regime (LOO, multi-positive last-5, closed-catalog robustness), and candidate budget (a $K$-sweep from 10 to 200 plus an adaptive-$K$ per-user strategy). No setting reaches statistical significance over CF on Beauty/Movies/Electronics; two of the largest models (Llama-3.3-70B and DeepSeek-V3-671B) and three V4 reasoning configurations trend \emph{below} CF on Movies, with V4-Pro-think significantly worse ($-43\%$, $p{=}.004$) at ${\sim}37\times$ the no-think cost. On MovieLens-25M (17.4\% recall, $6\times$ denser than Amazon) no LLM beats CF; on MIND News LLMs trend $+17$--$31\%$ but the 88\% zero-recall rate keeps the comparison non-significant.

\textbf{What the formalisation adds.} That ``retrieval bounds reranking'' is intuitive to practitioners is an objection we agree with: Theorem~\ref{thm:ceiling} is not a surprising statement, and its value is quantitative rather than conceptual (\S\ref{sec:dpi}). It fixes the numerical ceiling a measured NDCG must be read against, identifies the reranker's own \emph{window} rather than the retrieval budget $K$ as the binding quantity, and yields $\eta$. The intuition alone predicts neither the size of the oracle--realistic gap nor how small $\eta$ turns out to be.

\textbf{An honest correction.} An earlier version reported $p{<}0.0001$ supervised gains for LambdaMART above 8.8\% recall; that recall-threshold effect was a training--evaluation user overlap. Under the corrected disjoint-user protocol (\S\ref{subsec:neural_rerankers}) no supervised reranker beats CF at any tested recall level.

\subsection{Contributions}
\begin{itemize}[nosep,leftmargin=*]
  \item A formal \emph{recall ceiling} (Theorem~\ref{thm:ceiling}) with both LOO and multi-positive specialisations: $\mathbb{E}[\text{NDCG@}k] \leq \text{Recall@}|W_\pi|$ for any closed-candidate reranker (\S\ref{sec:theory}).
  \item Separation of the recall failure from the utilisation failure via $\eta$, scored against each reranker's own candidate window. On Beauty, the one primary dataset with enough retrieved positives to estimate it, $\eta$ stays below $16\%$ across nine LLMs and three supervised rerankers (\S\ref{sec:results}--\S\ref{sec:falsification}).
  \item A five-dimension stress test---model capacity, training signal, retrieval quality, evaluation regime, candidate budget---in which no configuration significantly beats CF, and the LoRA fine-tuned LLM is significantly \emph{worse} on Movies ($-39.5\%$, $p{=}.026$); plus a 92--95\% oracle/realistic gap validated across eight datasets and three domains.
  \item Identification and correction of a training-evaluation leak in prior supervised-reranker results, and the Recall-Aware Evaluation Protocol (RAEP) with adaptive-$K$ as a minimum standard for future work (\S\ref{sec:implications}).
\end{itemize}
\section{Related Work}\label{sec:related}

\paragraph{LLMs for Recommendation.}
The application of LLMs to recommendation spans zero-shot ranking~\cite{hou2024large}, conversational recommendation~\cite{zhang2023recommendation}, LLM-augmented item text~\cite{hanjia2024llmrec}, instruction-tuned preference prediction~\cite{bao2023tallrec}, generative retrieval~\cite{li2023gpt4rec}, and review-driven reasoning~\cite{li2025exp3rt}.
These works differ sharply in \emph{how the items being scored are selected}, and that difference is what decides whether the recall ceiling is exposed or hidden.

\paragraph{Which evaluations guarantee the ground truth is present.}
An earlier version of this paper stated that ``nearly all'' of this literature evaluates on oracle candidate sets.
That was too broad, and we correct it here.
Table~\ref{tab:protocol_audit} audits each work against its own reported setup.
Two distinct protocols place the ground truth in the scored set by construction: \emph{injection-oracle reranking}, where the held-out item is mixed with sampled negatives and the LLM reorders the result~\cite[\S3.1]{hou2024large}, and \emph{sampled-negative evaluation}, where each positive is scored against a fixed pool of non-interacted items~\cite{hanjia2024llmrec}, a protocol whose bias is well documented~\cite{krichene2020sampled}.
Both make recall trivially $100\%$ over the scored set, so neither can expose the ceiling.

\begin{table}[t]
\caption{Protocol audit of the LLM-for-recommendation work we cite, classified by how the scored item set is built. Only the first two rows guarantee the ground truth is present; our critique is directed at those, and \emph{not} at the remaining rows. ``GT present'' means guaranteed by construction rather than by retrieval succeeding.}\label{tab:protocol_audit}
\footnotesize
\setlength{\tabcolsep}{3pt}
\begin{tabular}{llcc}
\toprule
Work & Scored set & $|C|$ & GT present \\
\midrule
LLMRank \S3.1~\cite{hou2024large}   & GT $+$ sampled negatives & 20 & yes \\
LLM-Rec~\cite{hanjia2024llmrec}     & GT $+$ non-interacted items & 1{,}001 & yes \\
\midrule
LLMRank \S3.3~\cite{hou2024large}   & retrieved (7 generators) & 20 & no \\
EXP3RT~\cite{li2025exp3rt}          & retrieved (BPR-MF, LightGCN) & 20 & no \\
Beyond Utility~\cite{zhu2025beyond} & retrieved (union of 4) & 20 & no \\
RankGPT~\cite{sun2023chatgpt}       & retrieved (BM25, passage IR) & 100 & no \\
\midrule
TALLRec~\cite{bao2023tallrec}       & single target item, AUC & 1 & n/a \\
GPT4Rec~\cite{li2023gpt4rec}        & generated query $\to$ full corpus & --- & n/a \\
\bottomrule
\end{tabular}
\end{table}

Three exclusions deserve stating explicitly, because the distinction is easy to lose.
TALLRec~\cite{bao2023tallrec} predicts a binary ``Yes''/``No'' preference for one target item and reports AUC, so there is no candidate list to inject into.
GPT4Rec~\cite{li2023gpt4rec} generates search queries and retrieves from the full corpus with BM25---precisely the generative-retrieval escape route we identify in \S\ref{sec:escape}.
RankGPT~\cite{sun2023chatgpt} reranks a real BM25 top-100 in passage retrieval, not recommendation; we cite it as the origin of listwise LLM reranking, not as an instance of the protocol we critique.
None of the three is subject to our critique.

Among the works that do use realistic candidates, none quantifies the oracle/realistic gap or derives a bound: LLMRank~\cite{hou2024large} reports both regimes without measuring recall or relating them, at $K{=}20$ on catalogs of $3{,}706$ and $16{,}859$ items; Beyond Utility~\cite{zhu2025beyond} targets bias and hallucination instead; EXP3RT~\cite{li2025exp3rt} ranks $20$ CF-retrieved candidates on a small catalog. All three sit in comparatively high-recall regimes. Our contribution relative to them is the explicit bound (Theorem~\ref{thm:ceiling}), the measured gap (92--95\%), and the operating scale ($K{=}100$, catalogs to $|\mathcal{I}|{=}53{,}709$).

\paragraph{Neural Reranking.}
Learning-to-rank methods have long used neural architectures~\cite{nogueira2019passage}.
BERT-based rerankers achieved strong results in information retrieval and were extended to recommendation~\cite{li2023gpt4rec}.
Recent LLM-based rerankers leverage in-context learning rather than fine-tuning, introducing sensitivity to prompt design.
This body of work implicitly assumes the candidate set contains sufficient signal for reranking to be effective, an assumption we directly challenge.

\paragraph{Two-Stage Recommendation Systems.}
Industrial systems employ a retrieve-then-rerank pipeline~\cite{covington2016deep,huang2020embedding}, treating retrieval as a coarse filter and reranking as the primary personalization stage.
Our results invert this priority at the recall levels we test: retrieval quality is the binding constraint, and reranking optimization yields negligible returns when recall is low.

\paragraph{Graph-based and Contrastive Retrieval.}
Recent advances in collaborative filtering include self-supervised methods (SGL~\cite{wu2021self}, SimGCL~\cite{yu2022graph}), alignment-based training (DirectAU~\cite{wang2022towards}), and multi-interest retrieval (ComiRec~\cite{cen2020controllable}).
We test all of these as retrieval baselines, finding that even tuned methods remain below 12.3\% on Beauty (the smallest catalog) and below 4.9\% on the larger Movies and Electronics catalogs; the ceiling persists across retrieval paradigms.

\paragraph{Evaluation and Debiasing.}
Standard offline evaluation relies on metrics like NDCG~\cite{jarvelin2002cumulated}, but sampled metrics can introduce bias~\cite{krichene2020sampled}.
Counterfactual methods~\cite{yuta2020unbiased,schnabel2016recommendations} attempt to bridge the offline-online gap.
We employ bootstrap confidence intervals~\cite{efron1993introduction}, Wilcoxon signed-rank tests with Holm-Bonferroni correction~\cite{holm1979simple}, and population-level bootstrap tests, finding that even our already-pessimistic offline evaluation still overestimates what LLM reranking would achieve in production.
Applying IPS and SNIPS estimators~\cite{yuta2020unbiased} to our Beauty data produces estimates approximately 10$\times$ lower than direct evaluation, with high-variance doubly robust estimates confirming that our offline results are, if anything, optimistic.

\paragraph{Candidate Generation, Cascade Ranking and Score Fusion.}
Industrial cascade-ranking systems have long treated retrieval recall as the binding constraint and use heavy reranking only on a sufficient candidate pool~\cite{covington2016deep,huang2020embedding}.
We test two cascade-style strengtheners: (i) a learned LambdaMART fusion combining CF score, popularity, BM25, and dense-retrieval features under a disjoint train/eval user split (\S\ref{sec:learned_fusion}); and (ii) hybrid retrieval via reciprocal-rank fusion of CF, BM25, and Dense (\S\ref{sec:hybrid_retrieval}).
Neither breaks the ceiling on our primary Amazon datasets.
\section{Theoretical Framework}\label{sec:theory}

\subsection{Setup and Notation}
Consider a catalog $\mathcal{I}$ of $N$ items and a user $u$ with relevant items $Y_u \subseteq \mathcal{I}$.
A two-stage recommendation pipeline first applies a retrieval function $\mathcal{R}$ to produce a candidate set $C_u \subset \mathcal{I}$ with $|C_u| = K \ll N$, then applies a reranker $\pi$ to produce a ranked list $\hat{\sigma}_u = \pi(C_u)$.
The retrieval recall for user $u$ is $r_u = |C_u \cap Y_u| / |Y_u|$.

\begin{defn}[Closed-candidate reranker]\label{def:closed}
A reranker $\pi$ is \emph{closed-candidate} if its output is a permutation of a sub-list $W_\pi(C_u) \subseteq C_u$, with the remaining candidates appended below in their retrieved order.
We call $W_\pi$ the \emph{reranking window}; $|W_\pi|$ may be smaller than $K$ when a context budget prevents showing the reranker all $K$ candidates (\S\ref{sec:setup}).
No closed-candidate reranker can place an item of $\mathcal{I} \setminus C_u$ in its output.
\end{defn}

\subsection{The Recall Ceiling}\label{sec:ceiling}

\begin{theorem}[Recall ceiling]\label{thm:ceiling}
Let $M@k$ be any top-$k$ ranking metric of the form $M@k(\sigma,Y_u) = \sum_{i\ge1} a_i\,b_i \,/\, Z(Y_u)$, where $a_i \geq 0$ is the gain of the item $\sigma$ places at rank $i$, $b_1 \geq b_2 \geq \cdots \geq 0$ is a non-increasing positional discount, and the normaliser $Z(Y_u) > 0$ does not depend on $\sigma$.
Then for any closed-candidate reranker $\pi$ (Definition~\ref{def:closed}),
\begin{equation}\label{eq:ceiling}
  \mathbb{E}_u\!\left[M@k(\pi(C_u), Y_u)\right] \;\leq\; \mathbb{E}_u\!\left[M^*@k(W_\pi, Y_u)\right],
\end{equation}
where $M^*@k(W_\pi, Y_u)$ is the value attained by the \emph{oracle} reranker that places every item of $W_\pi \cap Y_u$ at the head of its permutation.
\end{theorem}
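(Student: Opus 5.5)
Fix a user $u$ and prove the inequality pointwise, $M@k(\pi(C_u),Y_u)\le M^*@k(W_\pi,Y_u)$; taking $\mathbb{E}_u$ then gives \eqref{eq:ceiling} by monotonicity of expectation. Since $Z(Y_u)>0$ does not depend on $\sigma$, it suffices to compare the unnormalised sums $\sum_{i} a_i b_i$. I will assume the gain is the natural one for the metric class: $a_i$ depends only on the item at rank $i$ and on $Y_u$, with $a_i=0$ whenever that item lies outside $Y_u$. This is the NDCG/Recall/Precision convention; under LOO it is $a_i=\mathbf{1}[\sigma_i\in Y_u]$. Where the metric truncates at $k$, I fold that into $b_i=0$ for $i>k$, which keeps $b$ non-increasing.

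\textbf{Step 1 (support).} By Definition~\ref{def:closed}, $\pi(C_u)$ is a permutation of $W_\pi$ placed at ranks $1,\dots,|W_\pi|$, followed by $C_u\setminus W_\pi$ in retrieved order. No item of $\mathcal{I}\setminus C_u$ ever appears. So the positive-gain items in the output are exactly those of $C_u\cap Y_u$.

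\textbf{Step 2 (rearrangement).} The positive-gain items of $W_\pi\cap Y_u$ occupy some set of ranks inside $\{1,\dots,|W_\pi|\}$. The oracle moves them to ranks $1,\dots,|W_\pi\cap Y_u|$, and among them orders by non-increasing gain. Because $b$ is non-increasing and the $a$'s are nonnegative, the rearrangement inequality (equivalently, repeated pairwise exchange arguments) shows that this assignment maximises $\sum_i a_i b_i$ over all permutations of $W_\pi$. Hence the window's contribution under $\pi$ is at most the oracle's. If $M^*$ is read as "relevant window items first, in any order," I need equal gains; that holds in the binary/LOO case, and otherwise I specify the sorted order.

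\textbf{Step 3 (the tail — the main obstacle).} Relevant items in $C_u\setminus W_\pi$ sit at ranks $>|W_\pi|$, so they can contribute positively to $\pi$'s score while the oracle $M^*@k(W_\pi,\cdot)$ as literally defined only credits $W_\pi\cap Y_u$. I will handle this in one of two ways:
\begin{itemize}
\item interpret the oracle as acting on the same full output, i.e.\ the same tail appended. Then the tail terms coincide on both sides and cancel, and Step 2 finishes the proof;
\item note that in the regime of interest $k\le|W_\pi|$, so $b_i=0$ for $i>|W_\pi|$ and the tail contributes nothing.
\end{itemize}
I expect to state the theorem under the first reading and remark on the second.

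\textbf{Corollary.} For LOO with $|Y_u|=1$ and $Z=\mathrm{IDCG}=b_1$, $M^*$ equals $\mathbf{1}[y_u\in W_\pi]$. This gives $\mathbb{E}[\text{NDCG@}k]\le\text{Recall@}|W_\pi|$.
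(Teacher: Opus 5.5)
Your proposal is correct and follows the paper's proof essentially step for step: a per-user support argument from Definition~\ref{def:closed}, the rearrangement inequality over permutations of $W_\pi$ with $Z(Y_u)$ factored out, and monotonicity of $\mathbb{E}_u$. Your Step~3 is a small but real improvement. The paper simply asserts that items of $C_u\setminus W_\pi$ cannot reach the top-$k$, which implicitly assumes $k\le|W_\pi|$; that holds in its experiments, but not in general. Your observation that the oracle is itself a closed-candidate reranker with the same window, and so carries the identical appended tail, makes the bound hold for every $k$. Only your LOO corollary $M^*=\mathbf{1}[y_u\in W_\pi]$ still needs $k\le|W_\pi|$.
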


\begin{proof}
Fix a user $u$ and write $n = |W_\pi|$. Items of $Y_u \setminus C_u$ never appear in $\pi(C_u)$ and items of $C_u \setminus W_\pi$ are pinned below rank $n$, so only $W_\pi \cap Y_u$ can contribute gain to the top-$k$.
The reranker chooses a permutation $\sigma \in S_n$, which assigns the gain sequence $(a_i)$ to the fixed non-increasing weight sequence $(b_i)$.
By the rearrangement inequality~\cite[Thm.~368]{hardy1934inequalities}, $\sum_i a_i b_i$ over all $\sigma \in S_n$ is maximised when the two sequences are similarly ordered, i.e.\ when the largest gains occupy the smallest ranks.
Since $a_i > 0$ exactly for the $m_u = |W_\pi \cap Y_u|$ relevant items, the maximum is attained by placing those $m_u$ items at ranks $1,\dots,m_u$, which is the oracle of the statement; $Z(Y_u)$ is unaffected.
Taking expectations over $u$ gives Eq.~\ref{eq:ceiling}.
\end{proof}

The hypothesis covers every metric we report and more: NDCG@$k$ ($b_i = 1/\log_2(i{+}1)$ for $i\le k$, else $0$), Hit@$k$ ($b_i = \mathbf{1}[i\le k]$), Recall@$k$, MAP@$k$ ($b_i = 1/i$), and also rank-biased precision~\cite{moffat2008rbp} and expected reciprocal rank~\cite{chapelle2009err}.
The single rearrangement argument replaces a per-metric case analysis: any non-increasing discount obeys the same ceiling.
The bound is determined entirely by which relevant items reached the reranking window; items in $Y_u \setminus W_\pi$ cannot be recovered.

\begin{corollary}[LOO specialisation]\label{cor:loo}
Under leave-one-out evaluation $|Y_u|{=}1$. With $R_u = \mathbf{1}[\text{positive item} \in W_\pi]$,
\begin{equation}\label{eq:loo}
  \mathbb{E}[\text{NDCG@}k(\pi)] = P(R_u{=}1)\cdot \mathbb{E}[\text{NDCG@}k \mid R_u{=}1] \leq \text{Recall@}|W_\pi|
\end{equation}
because $\mathbb{E}[\text{NDCG@}k \mid R_u{=}1] \leq 1$.
\end{corollary}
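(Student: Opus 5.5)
The plan is to derive the corollary from Theorem~\ref{thm:ceiling} by a per-user decomposition, then average over users. Fix a user $u$ with the single held-out positive $y_u$, so $Y_u=\{y_u\}$. Under LOO, the ideal DCG@$k$ equals $b_1 = 1/\log_2 2 = 1$. Since it depends only on $Y_u$, it is a valid normaliser $Z(Y_u)>0$ in the sense of the theorem. Consequently NDCG@$k(\pi(C_u),Y_u)=b_{\mathrm{rank}(y_u)}$ if $y_u$ appears in the top $k$ of $\pi(C_u)$, and $0$ otherwise.

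\textbf{Step 1 (zero contribution outside the window).} If $R_u=0$, i.e.\ $y_u\notin W_\pi$, then there are two cases. Either $y_u\notin C_u$, in which case it never appears in the output by Definition~\ref{def:closed}. Or $y_u\in C_u\setminus W_\pi$, in which case it is pinned below rank $|W_\pi|$. In both cases the theorem's oracle value $M^*@k(W_\pi,Y_u)$ is $0$. Hence NDCG@$k=0$ for such users.

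One subtlety needs care: if $|W_\pi|<k$, a pinned item could still land in the top $k$. I would handle this either by taking $k\le|W_\pi|$, which is the regime used in the paper, or by noting that the theorem's proof already treats only $W_\pi\cap Y_u$ as contributing gain. I expect this to be the only genuinely delicate point. I would state the assumption $k\le |W_\pi|$ explicitly rather than fight it.

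\textbf{Step 2 (bound on the window event).} If $R_u=1$, the oracle places $y_u$ at rank $1$, so $M^*@k=b_1/Z=1$. By Theorem~\ref{thm:ceiling} applied pointwise (its proof is per-user), NDCG@$k\le 1$. This bound is also immediate from $b_i\le b_1$.

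\textbf{Step 3 (law of total expectation).} Conditioning on $R_u$ and using Step 1 gives
$\mathbb{E}[\mathrm{NDCG@}k]=P(R_u{=}1)\,\mathbb{E}[\mathrm{NDCG@}k\mid R_u{=}1]+P(R_u{=}0)\cdot 0$. This is the displayed equality. Step 2 bounds the conditional expectation by $1$.

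\textbf{Step 4 (identify the probability with recall).} It remains to identify $P(R_u{=}1)$ with $\mathrm{Recall@}|W_\pi|$. With $|Y_u|=1$, the per-user recall of the window is $|W_\pi\cap Y_u|/|Y_u|=R_u$. Its average over users, which is the reported Recall@$|W_\pi|$, therefore equals $\mathbb{E}[R_u]=P(R_u{=}1)$.
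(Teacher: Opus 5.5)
Your proof is correct and follows the paper's own route. You split on $R_u$, note that NDCG@$k$ vanishes when the positive lies outside $W_\pi$, bound the conditional mean by $1$ because $\mathrm{IDCG}=b_1=1$, and identify $P(R_u{=}1)$ with the window's recall. Your explicit assumption $k\le|W_\pi|$ is a real improvement: the paper's proof of Theorem~\ref{thm:ceiling} needs it silently, since the pinned items of $C_u\setminus W_\pi$ sit at ranks $|W_\pi|{+}1,\dots$ and would otherwise enter the top-$k$. It does hold in the paper's reported settings ($k{=}10$).
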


For a reranker that sees all $K$ candidates the bound is Recall@$K$: on Amazon Movies (Recall@100$=$2.95\%), $\mathbb{E}[\text{NDCG@10}] \leq 0.0295$ for \emph{any} such reranker, whether zero-shot LLM, fine-tuned model, or oracle.
Equality is achieved by the oracle that places the relevant item at rank~1; real rerankers exploit only a fraction of the available headroom (\S\ref{sec:tightness}).

\begin{corollary}[Multi-positive specialisation]\label{cor:multipos}
When $|Y_u|>1$, only items in $W_\pi \cap Y_u$ contribute, so
\begin{equation}\label{eq:multipos}
\mathbb{E}[\text{NDCG@}k] \leq \mathbb{E}\!\left[\frac{\text{IDCG}_k(|W_\pi\cap Y_u|)}{\text{IDCG}_k(|Y_u|)}\right],
\end{equation}
where $\text{IDCG}_k(m){=}\sum_{i=1}^{\min(m,k)} 1/\log_2(i+1)$.
\end{corollary}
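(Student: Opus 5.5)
The plan is to derive Corollary~\ref{cor:multipos} directly from Theorem~\ref{thm:ceiling} by instantiating it with NDCG@$k$ and computing the oracle value $M^*@k$ in closed form. Take binary gains $a_i = \mathbf{1}[\sigma(i)\in Y_u]$ and discounts $b_i = 1/\log_2(i+1)$ for $i\le k$ and $b_i=0$ otherwise. These discounts are non-negative and non-increasing. The normaliser is $Z(Y_u)=\text{IDCG}_k(|Y_u|)$, which depends only on $|Y_u|$ and is positive whenever $Y_u\neq\emptyset$. So the hypotheses of the theorem hold, and we get $\mathbb{E}_u[\text{NDCG@}k(\pi(C_u),Y_u)] \le \mathbb{E}_u[M^*@k(W_\pi,Y_u)]$.

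The second step is to evaluate $M^*@k$ for a fixed user. Write $m_u=|W_\pi\cap Y_u|$. The oracle places these $m_u$ relevant items at ranks $1,\dots,m_u$. Every other position in the output carries gain $0$:
\begin{itemize}
\item the remaining items of $W_\pi$ are non-relevant;
\item items of $C_u\setminus W_\pi$ may be relevant, but they are pinned below rank $|W_\pi|$ by Definition~\ref{def:closed}.
\end{itemize}
The second point is the one place needing care. If $|W_\pi|<k$, a relevant item of $C_u\setminus W_\pi$ could land in ranks $|W_\pi|+1,\dots,k$ and contribute gain. So the bound should be stated with gains counted only inside the window, that is, assume $|W_\pi|\ge k$, which holds in the paper's setup. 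Alternatively, we can read the theorem's proof as attributing all top-$k$ gain to $W_\pi\cap Y_u$, and I would follow the theorem's own convention here. The oracle's DCG is then $\sum_{i=1}^{\min(m_u,k)} 1/\log_2(i+1)=\text{IDCG}_k(m_u)$. Dividing by $Z(Y_u)$ gives $M^*@k = \text{IDCG}_k(|W_\pi\cap Y_u|)/\text{IDCG}_k(|Y_u|)$.

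Substituting this into the expectation yields Eq.~\eqref{eq:multipos}. As a sanity check, for $|Y_u|=1$ the ratio equals $\mathbf{1}[\text{positive}\in W_\pi]$, which recovers Corollary~\ref{cor:loo}. The main obstacle is not computational. It is the bookkeeping about $|W_\pi|$ versus $k$ and about relevant items outside the window, which must be resolved by appeal to the theorem's convention as described above. Everything else is direct substitution.
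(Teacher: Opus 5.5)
Your proposal is correct and follows the paper's own route. You instantiate Theorem~\ref{thm:ceiling} for NDCG@$k$ with binary gains, discounts $1/\log_2(i+1)$ truncated at $k$, and normaliser $\text{IDCG}_k(|Y_u|)$, then evaluate the oracle value as $\text{IDCG}_k(|W_\pi\cap Y_u|)/\text{IDCG}_k(|Y_u|)$, which is what the paper's one-line ``only items in $W_\pi\cap Y_u$ contribute'' abbreviates. Your caveat is a real hypothesis that the paper leaves implicit, not a convention. If $|W_\pi|<k$, a relevant item of $C_u\setminus W_\pi$ pinned at a rank in $\{|W_\pi|+1,\dots,k\}$ makes the bound fail, so the ``alternative reading'' you offer would not rescue it. State $|W_\pi|\ge k$ explicitly; it holds in the paper's setting, where $|W_\pi|\in\{30,100\}$ and $k=10$.
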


This recovered-set bound is the tightest closed-candidate version of Theorem~\ref{thm:ceiling}.
The inject-then-rank ``oracle protocol'' used by some published LLM-rerank evaluations~\cite{hou2024large} corresponds to a strictly larger upper envelope, since the reranker is handed all $|Y_u|$ positives; that envelope is what we label the oracle column in \S\ref{sec:multipos}.

\subsection{What the Bound Is Not}\label{sec:dpi}
Theorem~\ref{thm:ceiling} is combinatorial, not information-theoretic.
The reranker output $\hat\sigma_u$ has support contained in $C_u$, so for any user with $Y_u \cap C_u = \emptyset$ (92--98\% of users on our three primary Amazon datasets under realistic retrieval; \S\ref{sec:results}) the top-$k$ cannot contain a relevant item and the per-user metric is zero.
This says nothing about the user profile $X_u$: a richer prompt can sharpen the ranking \emph{within} $C_u$ but cannot inject items outside it.

We state plainly what the theorem does and does not contribute.
That a reranker cannot recover an unretrieved item is intuitive, and we claim no surprise in the statement itself.
What the formalisation buys is quantitative: it fixes the exact numerical ceiling against which a measured NDCG must be read, it identifies $|W_\pi|$ rather than $K$ as the quantity that binds, and it yields the utilisation ratio of \S\ref{sec:tightness}, which is what separates a retrieval failure from a reranking failure.
The empirical contribution of this paper rests on those measurements, not on the difficulty of the proof.

\subsection{Tightness and Ceiling Utilisation}\label{sec:tightness}
The bound in Eq.~\ref{eq:ceiling} is tight: the oracle reranker, which knows $Y_u$ and places $W_\pi \cap Y_u$ at the top, achieves it exactly.

\begin{defn}[Ceiling utilisation]\label{def:eta}
For a reranker $\pi$ with window $W_\pi$,
\begin{equation}\label{eq:eta}
  \eta_\pi = \frac{\mathbb{E}_{u: r_u > 0}\left[\text{NDCG@}k(\pi)\right]}
                  {\mathbb{E}_{u: r_u > 0}\left[\text{NDCG}^*@k(W_\pi, Y_u)\right]},
\end{equation}
the fraction of its own attainable headroom that $\pi$ exploits, among users who could benefit.
Under LOO the denominator is Recall@$|W_\pi|$.
\end{defn}

Comparing $\eta$ across rerankers requires each to be scored against its own window, not a common $K$: a reranker shown $30$ candidates is not accountable for positives sitting at rank $31$--$100$.
We report $|W_\pi|$ for every method in \S\ref{sec:setup} and use it in every $\eta$ we quote.
Empirically $\eta$ stays small under realistic retrieval (\S\ref{sec:falsification}), indicating a double failure: low recall limits the ceiling, and the reranker exploits only a fraction of the already-limited headroom.

\subsection{Catalog-Size Effect and Breaking the Ceiling}
Random baseline recall is $\mathbb{E}[r]{=}K/|\mathcal{I}|$: 7.1\% on Beauty ($|\mathcal{I}|{=}1{,}416$) but 0.19\% on Movies ($|\mathcal{I}|{=}53{,}709$), so the ceiling bites hardest exactly where recommendation matters---large, diverse catalogs. As a rule of thumb NDCG@10${\approx}0.02$ needs recall $\gtrsim20\%$, which is $2.2\times$/$5.6\times$/$7.1\times$ beyond our best retrieval on Beauty/Movies/Electronics.
\section{Experimental Setup}\label{sec:setup}

\subsection{Datasets}
We evaluate on eight datasets spanning three domains (Table~\ref{tab:datasets}).
\textbf{Amazon product datasets} (six subdomains): Beauty, Movies, and Electronics serve as primary evaluation sets; Sports, Toys, and Office provide additional product-domain validation (1{,}580--1{,}704 users, 5.0--9.5K items, density 0.46--0.47\%; catalog sizes and recall in Table~\ref{tab:neural}).
All Amazon stats are reported on the sampled subsets that match prior work~\cite{hou2024large} (3-core filtering, chronological LOO split).
\textbf{MovieLens-25M}~\cite{harper2015movielens}: 10{,}000-user sample of the 25M-rating release, 1.99\% density.
\textbf{MIND News}~\cite{wu2020mind}: click-based implicit feedback.
Throughout the paper, ``catalog $|\mathcal{I}|$'' refers to items visible to retrieval (i.e.\ items appearing at least once in the training split); held-out test items raise the total observed catalog by ${<}2\%$.
All experiments draw $n{=}500$ evaluation users with seed=42 from the test split; the supervised-reranker experiments (\S\ref{subsec:neural_rerankers}) additionally draw a disjoint 250-user training pool. Throughout the paper we use this single 500-user evaluation sample, except for the hybrid-retrieval (\S\ref{sec:hybrid_retrieval}) and learned-fusion (\S\ref{sec:learned_fusion}) experiments, which independently re-sample 500 evaluation users from the same test split (CF baseline numbers may therefore differ by up to 1.4pp from the canonical sample due to independent user resampling, noted in each table caption).

\begin{table}[t]
\caption{Statistics for the five datasets carrying LLM experiments (3-core filtering, chronological LOO split); the three secondary Amazon subdomains appear in Table~\ref{tab:neural}. $|\mathcal{I}|$ counts items in the training split, i.e.\ the retrieval catalog; density is interactions per user divided by $|\mathcal{I}|$. An additional 24/1{,}010/924 held-out test items on Beauty/Movies/Electronics are invisible to retrieval (the open-catalog setting, \S\ref{sec:closed_catalog}).}\label{tab:datasets}
\small
\setlength{\tabcolsep}{4pt}
\begin{tabular}{lrrrrl}
\toprule
Dataset & Users & $|\mathcal{I}|$ & Interactions & Density & Domain \\
\midrule
Beauty & 938 & 1{,}416 & 8{,}241 & 0.62\% & Product \\
Movies & 1{,}998 & 53{,}709 & 358{,}587 & 0.33\% & Product \\
Electronics & 2{,}000 & 28{,}981 & 183{,}913 & 0.32\% & Product \\
MovieLens-25M & 10{,}000 & 19{,}886 & 3{,}949{,}023 & 1.99\% & Movie \\
MIND News & 86{,}746 & 20{,}212 & 2{,}278{,}768 & 0.13\% & News \\
\bottomrule
\end{tabular}
\end{table}

\subsection{Retrieval Methods}
Twelve methods span classical, graph-based, contrastive, multi-interest, and ensemble paradigms:
\textbf{Classical:} Popularity, ItemKNN, BPR-MF~\cite{rendle2009bpr}, CF-SVD (TruncatedSVD, 128 factors).
\textbf{GNN:} LightGCN~\cite{he2020lightgcn}.
\textbf{Contrastive:} SimGCL~\cite{yu2022graph}, SGL~\cite{wu2021self}.
\textbf{Alignment:} DirectAU~\cite{wang2022towards}.
\textbf{Multi-interest:} MultiInterest (multi-vector retrieval).
\textbf{Ensemble:} MultiSource-RRF (CF-SVD + ItemKNN + Dense Retrieval via Reciprocal Rank Fusion).
\textbf{Sequential:} SASRec~\cite{kang2018self}, BERT4Rec~\cite{sun2019bert4rec}.
Default candidate set size $K=100$.

\subsection{LLMs and Neural Rerankers}
\textbf{Nine LLMs across six providers:}
Gemma3-4B, Qwen3-8B (local GPU); GPT-4o-mini (OpenAI); Qwen3-32B, Llama-3.3-70B (Groq); DeepSeek-V3-671B and the April-2026 DeepSeek-V4-Pro and V4-Flash reasoning models, each evaluated with and without ``thinking'' mode (max 32K reasoning tokens) (DeepSeek); Kimi-K2-Turbo (Moonshot).
Parameter range: 4B--671B (168$\times$).
Three prompt strategies of increasing sophistication: P1 (basic CoT), P2 (direct ranking), P3 (enhanced CoT with explicit preference extraction).
Each prompt includes the user's last-$N$ interaction titles (and, where available, categorical metadata) as the user profile~$X_u$.

\textbf{Three neural rerankers:} LambdaMART, RankNet, and MLP-Reranker trained on held-out interaction data using CF embedding features.
These operate without item text, enabling evaluation on all datasets including ASIN-only catalogs.

\textbf{Reranking windows.}
Retrieval always returns $K{=}100$ candidates, but not every reranker sees all of them, and Theorem~\ref{thm:ceiling} binds on what each one sees (Definition~\ref{def:closed}).
The pointwise rerankers---LambdaMART, RankNet, MLP and the LoRA-fine-tuned LLM---score every candidate, so $|W_\pi|{=}100$.
The listwise LLM prompts show the top-$30$ candidates in CF rank order and append positions $31$--$100$ below the LLM's output in CF order, so $|W_\pi|{=}30$; this is a context-budget decision matching common practice in listwise LLM reranking~\cite{sun2023chatgpt}, and we state it because it changes the denominator of $\eta$.
On one sample Recall@$30$ is 3.21\%/1.11\%/0.90\% (Beauty/Movies/Electronics) against Recall@$100$ of 8.23\%/2.99\%/2.56\%, so scoring a $30$-window LLM against Recall@$100$ understates $\eta$ by $2.6$--$2.8\times$.
Every $\eta$ we report uses the method's own window.

\textbf{Item text.}
Titles are available for every item on Beauty, Movies, Sports, Toys, Office, MovieLens and MIND, but the Amazon Electronics subset is \emph{ASIN-only}: none of its 29{,}905 items carries a usable title.
LLM prompts there contain opaque identifiers and no semantic content, and the LLM reproduces the CF order almost exactly ($\tau{=}1.000$ under the plain prompt).
Electronics remains informative for retrieval and for the feature-based rerankers, which use CF embeddings rather than text, but every Electronics \emph{LLM} result here is a null control---evidence about the ceiling, not about semantic reranking---and this is what produces its exact ties against CF.
Unless stated otherwise LLM prompts contain titles only, presented in CF rank order, so the CF prior is implicit but CF scores and ranks are never stated; \S\ref{sec:score_aware} relaxes exactly that.

\subsection{Evaluation Protocol}
Primary metric: NDCG@10~\cite{jarvelin2002cumulated}.
Supplementary: Hit@10, MAP@10.
All sample sizes $n{=}500$ users (SEED=42) unless noted; an additional $n{=}1{,}996$ validation on Movies confirms findings at 4$\times$ scale.
Statistical tests: bootstrap 95\% CIs (1,000 resamples), Wilcoxon signed-rank tests with Holm-Bonferroni correction for multiple comparisons, and population-level bootstrap tests (10,000 resamples) for dataset-level significance.

We distinguish three quantities throughout the paper, which the literature has often conflated.
\textbf{Realistic} NDCG@10 uses candidates from retrieval only; the reranker sees exactly what a production system would provide.
\textbf{Injection-oracle} NDCG@10 injects the held-out relevant item into the candidate set (guaranteed recall$=$100\% at the user level) and lets the reranker reorder; this is the protocol of \cite{hou2024large,sun2023chatgpt} and is what we critique.
The \textbf{closed-candidate upper bound} from Theorem~\ref{thm:ceiling} (Eq.~\ref{eq:ceiling}) is the largest value any closed-candidate reranker can achieve on the realistic candidate set; under LOO it equals Recall@$|W_\pi|$ and is strictly smaller than the injection-oracle on every dataset.
The gap between realistic and injection-oracle quantifies how far the prior-protocol's reported numbers sit from realistic-retrieval performance, not a deployment forecast in itself.

\subsection{Data Preprocessing}
All Amazon datasets undergo standard 3-core filtering (remove users and items with $<$3 interactions) followed by chronological train/validation/test split.
For LOO evaluation, the last interaction per user is held out for testing; the second-to-last for validation.
Item text is cleaned by removing marketing language, stopwords, and HTML artifacts, then lowercased.
For MovieLens, we use movie titles with genre annotations (e.g., ``The Shawshank Redemption [Drama]'').
For MIND News, article titles serve as item text.

\subsection{Reproducibility}
All code, the processed datasets, and every result JSON cited in this paper are released at \url{https://github.com/GeoffreyWang1117/recall-ceiling-cikm2026}.
Its README maps each table and figure to the script that produces it and the result file it reads, and the release includes the diagnostic that reproduces the training/evaluation leak we retract in \S\ref{subsec:neural_rerankers}.
The LLM reranking harness implements per-user checkpoint/resume, enabling exact reproduction of API-based experiments.
Random seed is SEED=42 throughout; all results are deterministic modulo LLM API non-determinism.
Total API cost for the complete 7-model $\times$ 5-dataset comparison: \$1.24.
\section{The Recall Ceiling in Practice}\label{sec:results}
This section answers two questions.
\textbf{RQ1:} How large is the gap between oracle and realistic evaluation, and across which metrics does it persist (\S\ref{sec:gap}, \S\ref{sec:multimetric})?
\textbf{RQ2:} Can stronger retrieval (tuned LightGCN, contrastive, alignment, multi-interest, multi-source ensembles, or hybrid CF$+$BM25$+$Dense fusion) raise the ceiling (\S\ref{sec:no_retrieval}, \S\ref{sec:crossdomain}, \S\ref{sec:hybrid_retrieval})?
Throughout we refer to a single canonical $n{=}500$ sample (seed=42) on Beauty/Movies/Electronics, on which CF-SVD reaches Recall@100 of 8.23\%/2.95\%/2.24\% with zero-recall rates of 91.8\%/97.1\%/97.8\%.
The closed-candidate upper bound of Eq.~\ref{eq:ceiling} for a full-window reranker is therefore $.0823$/$.0295$/$.0224$, against a CF NDCG@10 of $.0051$/$.0084$/$.0035$; no method in this paper---best zero-shot LLM ($.0051$/$.0093$/$.0032$), best supervised reranker ($.0086$/$.0082$/$.0031$), or LoRA-fine-tuned LLaMA-3B ($.0078$/$.0051$/$.0025$)---escapes that band.
The injection-oracle protocol, by contrast, reports $.0856$/$.1039$/$.0971$.

\subsection{Oracle vs.\ Realistic: a 92--95\% NDCG Gap}\label{sec:gap}

Table~\ref{tab:oracle_realistic} presents the central result.
Under oracle conditions LLM reranking achieves NDCG@10 of 0.086--0.104, demonstrating genuine ranking capability when relevant items are guaranteed present; under realistic retrieval it collapses to 0.005--0.007, indistinguishable from CF, a gap of 92.4--94.9\% on all three primary datasets.

\begin{table}[t]
\caption{Injection-oracle vs.\ realistic NDCG@10 ($n{=}500$, Gemma3-4B, $K{=}100$, seed=42). Brackets are 95\% bootstrap CIs (10{,}000 resamples); the two columns' CIs never overlap. The closed-candidate bound (.0823/.0295/.0224) sits far below the injection-oracle column, which hands the reranker 100\% recall.}\label{tab:oracle_realistic}
\footnotesize
\setlength{\tabcolsep}{2pt}
\begin{tabular}{lcccc}
\toprule
Dataset & CF Base & Inj.-oracle [95\% CI] & Realistic [95\% CI] & Gap \\
\midrule
Beauty & .0050 & .0856 [.069,.103] & .0065 [.003,.012] & 92.4\% \\
Movies & .0086 & .1039 [.095,.113] & .0067 [.004,.009] & 93.5\% \\
Electronics & .0042 & .0971 [.088,.107] & .0050 [.003,.008] & 94.9\% \\
\bottomrule
\end{tabular}
\end{table}

Figure~\ref{fig:overview}B visualises the gap, and it is not a calibration issue: at 8\% recall on Beauty there is a 92\% probability that no relevant item appears in the candidate set at all, and for those users \emph{every} reranker scores NDCG@10 $=0$ by construction.
Injection-oracle evaluation removes that probability entirely---the equivalent of testing a search engine after pre-loading the answer into the index.
A bootstrap of the mean LLM$-$CF difference (10,000 resamples) yields 95\% CIs containing zero on all three datasets ($p{=}0.414$/$0.635$/$0.758$), and the paired Wilcoxon agrees: we cannot reject the null that LLM and CF NDCG are equal under realistic retrieval.

\subsection{No Retrieval Method Breaks the Ceiling}\label{sec:no_retrieval}

Table~\ref{tab:retrieval} shows Recall@100 across the retrieval methods we evaluated ($n{=}500$, $K{=}100$).
The best per-dataset values are 12.33\% on Beauty (LightGCN with per-dataset tuned hyperparameters), 4.81\% on Movies (DirectAU), and 3.13\% on Electronics (MultiInterest).
These methods span classical (CF-SVD), graph-based (LightGCN), contrastive (SimGCL, SGL), alignment-based (DirectAU), multi-interest, and production ensemble (MultiSource-RRF) paradigms.

\textbf{LightGCN tuning sweep.}
Earlier drafts reported untuned LightGCN values (4.99\%/0.19\%/0.15\%) far below field SOTA.
A grid over embedding dimension $\in\{64,128\}$, layers $\in\{2,3,4\}$, learning rate $\in\{10^{-3},5{\cdot}10^{-3}\}$ and $L_2$ reg $\in\{10^{-4},10^{-3}\}$ at 50 epochs lifts them to 12.33\%/3.84\%/2.90\%, competitive with the strongest other method on each dataset and, on Beauty, above our MultiSource-RRF ensemble; we do not claim field-SOTA recovery, since a wider grid could yield more.
The ceiling still binds: the best tuned retriever leaves 87.7\%--97.1\% of relevant items unretrieved, and no reranker beats CF significantly at any of these recall levels (\S\ref{subsec:neural_rerankers}).

\begin{table}[t]
\caption{Recall@100 (\%) on the retrieval-sweep sample ($n{=}500$, resampled per-retriever from the canonical test pool; the CF-SVD row differs from $\{8.23,2.95,2.24\}$\% by ${\le}0.32$pp from sampling variance). \textbf{Bold} marks the best per column. Even the best tuned retriever leaves 87.7\%--97.1\% of relevant items outside the top-100.}\label{tab:retrieval}
\small
\setlength{\tabcolsep}{4pt}
\begin{tabular}{lccc}
\toprule
Method & Beauty & Movies & Elec. \\
\midrule
CF-SVD & 8.23 & 2.99 & 2.56 \\
ItemKNN & 8.37 & 3.64 & 2.04 \\
LightGCN-Tuned & \bestcell{12.33} & 3.84 & 2.90 \\
SimGCL & 5.19 & 3.78 & 2.38 \\
SGL & 8.43 & 4.31 & 2.35 \\
DirectAU & 7.95 & \bestcell{4.81} & 1.88 \\
MultiInterest & 9.62 & 3.43 & \bestcell{3.13} \\
MultiSource-RRF & 10.12 & 4.45 & 2.93 \\
\bottomrule
\end{tabular}
\end{table}

Best Recall@100 falls broadly with catalog scale across all eight datasets---the Amazon and MIND-News points all sit below 13\% and MovieLens-25M, the densest at 1.99\%, reaches only 18.9\%---though density is the stronger descriptive correlate (\S\ref{sec:density_analysis}).
The ceiling is most severe where recommendation is most needed: in large, diverse catalogs.

\subsection{Multi-Metric Consistency}\label{sec:multimetric}
The ceiling is not an artifact of NDCG: under oracle vs realistic, the gap is 76--96\% across NDCG@10, Hit@10, and MAP@10 (75.9\% Hit@10 on Movies to 96.1\% MAP@10 on Beauty/Electronics).
The ceiling depresses every top-$k$ ranking metric, not just NDCG.

\subsection{Cross-Domain Validation}\label{sec:crossdomain}

\paragraph{MovieLens-25M (density 1.99\%, recall 17.4\%).}
This dataset is 6$\times$ denser than Amazon and provides real movie titles with genre metadata.
CF-SVD achieves 17.4\% Recall@100 on the LLM-evaluation sample ($n{=}500$) and 18.9\% on the larger RAEP diagnostic sample ($n{=}1{,}000$; \S\ref{sec:raep}).
Despite the substantially higher recall and richer item text, all five cloud LLMs perform \emph{below} CF baseline.
Kendall's $\tau$ between the CF candidate order and the LLM output order is uniformly high (0.957--0.999) but \emph{not} monotone in NDCG---the lowest-NDCG model (Llama-3.3) is not the lowest-$\tau$ model---so the dominant failure mode appears to be disruption of useful CF order rather than aggressive reordering per se.
We cannot rule out alternative explanations from five points alone.

\paragraph{MIND News (recall 12\%).}
LLMs show positive direction over CF (+17--31\% across GPT-4o-mini, Llama-3.3-70B, DeepSeek-V3, Kimi-K2; all $p{>}.40$).
Zero-recall is 88\%, oracle gap is 85\%.
News article titles carry strong semantic signal, but the high zero-recall rate keeps the comparison non-significant.

Across the full 0.13--1.99\% density range no LLM reaches significance on any dataset.
Ordered by recall, the best LLM against CF is: Electronics 2.6\% recall, $.0032$ vs $.0032$ ($0.0\%$, undefined---an ASIN-only null control, \S\ref{sec:setup}); Movies 3.1\%, $.0093$ vs $.0098$ ($-5.1\%$, $p{=}.383$); Beauty 8.2\%, $.0051$ vs $.0051$ ($0.0\%$); MIND News 12\%, $.0175$ vs $.0134$ ($+30.6\%$, $p{=}.469$); MovieLens 17.4\%, $.0120$ vs $.0128$ ($-6.3\%$, $p{=}.767$).
Per-model breakdowns are in the released repository.

\subsection{What Determines the Recall Ceiling}\label{sec:density_analysis}

Across all eight datasets, data density is the strongest descriptive correlate of recall (Pearson $r{=}{+}0.604$, $n{=}8$, a coefficient rather than an inferential test); catalog size is weaker in isolation ($r{=}{-}0.290$) but contributes through its inverse relation with density. A two-predictor log-linear fit reaches only $R^2{=}0.39$ on eight points and is a rule-of-thumb, not a model; we report the coefficients in the released code rather than the paper. MovieLens (1.99\% density) achieves the highest recall and Electronics (0.32\%) the lowest. Three further Amazon subdomains (Office/Sports/Toys) confirm the ceiling persists (recall 7.0--12.8\%). Within Movies the ceiling is not a cold-start artefact: users with $>50$ interactions still face 71.6\% zero-recall, and LLM reranking degrades these active users by 30\%.
\section{Falsification: Optimisation Strategies Do Not Break the Ceiling}\label{sec:falsification}
This section answers two further research questions.
\textbf{RQ3:} Once retrieval is fixed, do prompt engineering, model scaling, or reasoning-tuned models help (\S\ref{sec:prompts}, \S\ref{sec:scaling}, \S\ref{sec:frontier_reasoning})?
\textbf{RQ4:} Do supervised, fine-tuned, or learned-cascade rerankers help under a disjoint train/eval protocol (\S\ref{subsec:neural_rerankers}, \S\ref{sec:learned_fusion})?

Every strategy that improves performance under oracle evaluation fails to produce a statistically significant gain over CF under realistic conditions on our primary Amazon datasets.
The section is organised as a sequence of ablations, each removing one candidate explanation for that failure: prompt design (\S\ref{sec:prompts}), model capacity (\S\ref{sec:scaling}), test-time reasoning (\S\ref{sec:frontier_reasoning}), supervised training signal (\S\ref{subsec:neural_rerankers}), access to structured CF signals and score fusion (\S\ref{sec:score_aware}), temporal modelling (\S\ref{sec:sequential}), candidate budget (\S\ref{sec:ksweep}), text-aware retrieval (\S\ref{sec:hybrid_retrieval}), learned cascade fusion (\S\ref{sec:learned_fusion}), open-catalog artefacts (\S\ref{sec:closed_catalog}), and single-positive evaluation (\S\ref{sec:multipos}).
Per-metric breakdowns with confidence intervals for each are in the released code repository.

\subsection{Prompt Engineering}\label{sec:prompts}
Under oracle conditions ($n{=}300$, Movies) prompt choice matters: P2 Direct reaches NDCG@10$=.0814$, $+52\%$ over P1 Basic ($.0535$). Under realistic retrieval ($n{=}500$) the advantage vanishes---all three prompts converge to $\leq.004$ with overlapping CIs, P3 trending 22\% \emph{below} P1---and no pairwise comparison survives Holm-Bonferroni.

\subsection{Model Scaling}\label{sec:scaling}

Seven LLMs spanning a 168$\times$ parameter range and five providers were evaluated at $n{=}500$ under realistic retrieval (Table~\ref{tab:models}, upper block); none outperforms the \$0.00 CF baseline at any uncorrected $\alpha$ on any dataset.
On Movies the same seven models run from $.0098$ (Qwen3-8B, exactly CF) down to $.0054$ (Llama-3.3-70B), and the two largest are \emph{directionally worse}, with uncorrected paired Wilcoxon significant for DeepSeek-V3 ($.0064$, $p{=}.032$) and Llama-3.3-70B ($p{=}.006$); only Llama-3.3 survives within-Movies Holm correction and neither survives the 21-test cross-dataset correction, so we report this as directional with within-Movies significance only.
Larger models appear to apply stronger semantic priors that overwrite the CF signal carried by the candidate ordering, without compensating information gain: across the whole 168$\times$ range, all models cluster around CF on both Beauty and Movies.

\begin{table}[t]
\caption{All reranker types on Beauty under realistic retrieval ($n{=}500$, $K{=}100$; supervised rows under the disjoint train/eval split of \S\ref{subsec:neural_rerankers}, LoRA detail in \S\ref{subsec:neural_rerankers}). $|W|$ is the reranking window (Definition~\ref{def:closed}): listwise LLM prompts see the top-$30$ CF candidates, pointwise rerankers score all $100$. $\eta$ divides NDCG@10 by that method's \emph{own} bound, Recall@$|W|$ ($.0321$ at $|W|{=}30$, $.0823$ at $|W|{=}100$); an earlier version scored the LLMs against Recall@$100$ and so understated their $\eta$ by ${\approx}2.6\times$. $p$ is paired Wilcoxon vs.\ CF. The last two rows separate the closed-candidate bound of Theorem~\ref{thm:ceiling} from the injection-oracle protocol.}\label{tab:models}
\footnotesize
\setlength{\tabcolsep}{3pt}
\begin{tabular}{llcccl}
\toprule
Method & Size & $|W|$ & NDCG@10 & $\eta$ & $p$ \\
\midrule
CF-Score & --- & 100 & .0051 & 6.1\% & --- \\
\midrule
Qwen3-8B & 8B & 30 & .0051 & 15.7\% & n.s. \\
Qwen3-32B & 32B & 30 & .0050 & 15.4\% & .655 \\
GPT-4o-mini & ${\sim}$8B & 30 & .0045 & 14.1\% & .374 \\
Gemma3-4B & 4B & 30 & .0045 & 14.1\% & .500 \\
Llama-3.3 & 70B & 30 & .0038 & 11.8\% & .484 \\
Kimi-K2 & MoE & 30 & .0028 & 8.8\% & .208 \\
DeepSeek-V3 & 671B & 30 & .0016 & 4.9\% & .093 \\
\midrule
LambdaMART & --- & 100 & .0086 & 10.4\% & .142 \\
RankNet & --- & 100 & .0085 & 10.3\% & .142 \\
MLP & --- & 100 & .0058 & 7.0\% & .484 \\
LoRA-LLaMA-3B & 3B & 100 & .0078 & 9.4\% & .191 \\
\midrule
Bound $=$ Recall@$|W|$ & --- & 100 & .0823 & 100\% & --- \\
Injection-oracle & --- & --- & .0856 & --- & --- \\
\bottomrule
\end{tabular}
\end{table}

\subsection{Frontier Reasoning Models Cannot Escape the Ceiling}\label{sec:frontier_reasoning}

The models in Table~\ref{tab:models} predate the latest reasoning-tuned generation, so the ceiling might bind only to ``older'' LLMs.
We tested DeepSeek-V4-Pro and V4-Flash (April 2026) under standard and \emph{thinking} modes on the canonical Movies sample (CF NDCG@10$={.}0098$, $n{=}500$): all four configurations underperform CF, three significantly after Holm--Bonferroni (V4-Pro-think $-43\%$ $p{=}.004$; V4-Flash-no-think $-40\%$ $p{=}.005$; V4-Flash-think $-50\%$ $p{=}.0007$; V4-Pro-no-think $-7\%$ $p{=}.245$), and all four underperform directionally on Beauty ($p\in[.124,.345]$).
Enabling thinking strictly degrades ranking at ${\sim}37\times$ the cost (\$4.75 vs \$0.13 per 500 users) and 200--500\,s versus ${<}1$\,s per user: reasoning chains have nothing to operate on when the relevant items are absent.

\subsection{Supervised Neural Rerankers: a Self-Correction}\label{subsec:neural_rerankers}

Supervised rerankers (LambdaMART, RankNet, MLP) trained on held-out interaction data test the ceiling more strongly than zero-shot LLMs: they learn from user feedback rather than semantic priors. An earlier version of this section reported $p{<}0.0001$ gains on Toys/Sports/Office (recall ${\geq}8.8\%$) but not on Beauty/Movies/Electronics, an apparent recall-threshold effect we could not reproduce: the training and evaluation user pools overlapped on the three affected datasets, so each user's test item was simultaneously a supervision label and an evaluation target. Replication across three protocols (overlap, disjoint halves, separate clean pool) isolated the effect---the inflated significance disappears under any non-leaky split---and the threshold vanishes under the corrected protocol used in Table~\ref{tab:neural}: 250 training users disjoint from the 500 evaluation users, test item held out at retrieval time, all three reranker types, six Amazon datasets, $K{=}100$. We document this because leakage of this form is an under-reported source of false positives in this literature.

\begin{table}[t]
\caption{Supervised rerankers on six Amazon datasets under a disjoint train/eval user split ($n{=}500$, $K{=}100$); cells give NDCG@10 / Wilcoxon $p$ vs.\ CF, \textbf{bold} marking within-row Holm significance. No reranker beats CF anywhere, at any recall level from 2.2\% to 12.8\%.}\label{tab:neural}
\footnotesize
\setlength{\tabcolsep}{2pt}
\begin{tabular}{lrrrccc}
\toprule
Dataset & Cat. & Recall & CF & LambdaMART & RankNet & MLP \\
\midrule
Beauty       & 1{,}416  & 8.2\%  & .0051 & .0086 / .142          & .0085 / .142          & .0058 / .484 \\
Movies       & 53{,}709 & 3.0\%  & .0084 & .0059 / .051          & .0052 / .194          & .0082 / .795 \\
Electronics  & 28{,}981 & 2.2\%  & .0035 & \bestcell{.0021 / .027} & .0015 / .144          & .0031 / .315 \\
Toys         & 9{,}545  & 7.0\%  & .0112 & \bestcell{.0040 / .023} & \bestcell{.0017 / .033} & \bestcell{.0072 / .021} \\
Sports       & 5{,}246  & 12.8\% & .0115 & .0083 / .196          & .0096 / .508          & .0105 / .623 \\
Office       & 5{,}037  & 11.4\% & .0205 & \bestcell{.0107 / .031} & \bestcell{.0039 / .003} & \bestcell{.0137 / .034} \\
\bottomrule
\end{tabular}
\end{table}

\textbf{Anatomy of the leak.} Under the leaky protocol, LambdaMART on the 200 training users reached NDCG@10$={.}080$ on Toys (12$\times$ CF) while on 300 held-out users from the same evaluation pool it stayed at $.0064$ (below CF); the aggregated $.0464$ that produced the apparent $p{<}0.0001$ is the average of memorised and unseen sub-populations.

\textbf{Implication and multiplicity.}
Under the disjoint protocol the ceiling binds even more tightly than the previous version of this section suggested: at every tested recall level between 2.2\% and 12.8\% the directional gap is non-positive, consistent with a reranker that has little signal to exploit and whose reordering predominantly destroys the CF prior.
The ``8.8\% boundary'' framing of the previous draft does not survive and we do not retain it.
Bold entries reflect within-row Holm--Bonferroni (3 tests per dataset); under the stricter 18-test cross-dataset correction ($\alpha/18\approx0.00278$) even the smallest $p$ ($.003$, Office/RankNet) fails, so the ``significantly worse on four of six'' subclaim is within-row-significant and cross-dataset directional only.

\paragraph{All reranker types converge under low recall.}
The lower block of Table~\ref{tab:models} places these supervised rerankers on Beauty alongside the zero-shot LLMs and a LoRA-fine-tuned LLM reranker (LLaMA-3.2-3B-Instruct, $r{=}16$/$\alpha{=}32$ QLoRA on $\{q,k,v,o\}_{\text{proj}}$, pointwise binary head, 3 epochs on the 250 disjoint train users).
Despite spanning 4B--671B zero-shot models, three supervised architectures, a fine-tuned LLM, and the CF baseline itself, all methods produce statistically indistinguishable NDCG@10 ($p{>}0.05$).
Scored against each method's own window, the ceiling utilisation ratio $\eta$ ranges from 4.9\% (DeepSeek-V3) to 15.7\% (Qwen3-8B) among the LLMs and from 7.0\% to 10.4\% among the full-window rerankers---all far below the oracle's 100\%.

\paragraph{Fine-tuning the LLM does not break the ceiling.}
The LoRA-fine-tuned LLaMA-3.2-3B-Instruct shows no significant gain over CF on any of the three primary Amazon datasets: Beauty trends positive ($+53.6\%$, NDCG@10 .0078 vs .0051, paired Wilcoxon $p{=}.191$); Movies is significantly \emph{worse} ($-39.5\%$, .0051 vs .0084, $p{=}.026$); Electronics trends negative ($-30.8\%$, .0025 vs .0035, $p{=}.478$).
The Movies failure is mechanistic: the limited disjoint training pool plus extreme zero-recall mass (97\% of users on Movies, 98\% on Electronics; \S\ref{sec:results}) drives the model to disrupt CF's prior signal in the non-tied minority without recovering useful structure.
This complements the zero-shot pattern above: supervised fine-tuning closes the LLM/traditional-supervised gap on Beauty but does not produce significant gains anywhere, and on the two larger catalogs (Movies, Electronics) it actively degrades ranking, significantly on Movies.
Items absent from $C_u$ remain unrecoverable regardless of training signal (Theorem~\ref{thm:ceiling}).
This convergence across architectures, scales, training regimes, and retrieval-coupling strategies is the strongest evidence for the ceiling's binding nature.

\subsection{Score-Aware Prompting and LLM$+$CF Fusion}\label{sec:score_aware}
The experiments so far give the LLM item titles and the CF ordering, but never the CF \emph{scores} or ranks explicitly.
If the LLM's deficit stems from being denied the structured signals a production ranker consumes, then handing it those signals should close the gap.
We test this directly.
The \emph{score-aware} prompt annotates each candidate with its CF rank, its CF score min-max normalised within the shown window, and its catalog popularity percentile, and tells the model it may use, override, or ignore them; everything else is held fixed against the \emph{plain} prompt in the same run (GPT-4o-mini, $n{=}500$, seed=42, $|W|{=}30$).
We additionally fuse the LLM's output order with the CF order without any further model calls, by reciprocal-rank fusion and by a convex blend $\lambda\cdot\text{CF}+(1{-}\lambda)\cdot\text{LLM}$.

\begin{table}[t]
\caption{Score-aware prompting (GPT-4o-mini, $n{=}500$, $|W|{=}30$, seed=42). $\tau$ is Kendall's $\tau$ between the LLM output order and the CF order. Score-aware prompting improves every dataset and raises $\tau$ every time: the LLM gets better by overriding CF \emph{less}. No variant significantly beats CF. Electronics is an ASIN-only catalog with no item titles (\S\ref{sec:setup}), so its LLM columns are a null control rather than a test of semantic reranking.}\label{tab:score_aware}
\footnotesize
\setlength{\tabcolsep}{3pt}
\begin{tabular}{lcccccc}
\toprule
 & \multicolumn{2}{c}{Beauty} & \multicolumn{2}{c}{Movies} & \multicolumn{2}{c}{Elec.} \\
\cmidrule(lr){2-3}\cmidrule(lr){4-5}\cmidrule(lr){6-7}
Variant & NDCG & $\tau$ & NDCG & $\tau$ & NDCG & $\tau$ \\
\midrule
CF-SVD           & .0051 & ---   & .0098 & ---   & .0032 & ---   \\
LLM, plain       & .0032 & .881  & .0090 & .848  & .0032 & 1.000 \\
LLM, score-aware & .0043 & .946  & .0096 & .931  & .0034 & .984  \\
\bottomrule
\end{tabular}
\end{table}

The answer, in Table~\ref{tab:score_aware}, has two halves.
Showing the CF signals helps consistently---the deficit against CF shrinks from $-36.7\%$ ($p{=}.091$) to $-15.3\%$ ($p{=}.109$) on Beauty, from $-8.0\%$ ($p{=}.285$) to $-2.3\%$ ($p{=}.670$) on Movies, and from $0.0\%$ to $+3.9\%$ ($p{=}.180$) on Electronics---but the mechanism is not added information, it is increased deference: Kendall's $\tau$ against the CF order rises every time the signals are shown, so the model improves precisely by disturbing CF less.
That is direct evidence for the interpretation offered in \S\ref{sec:scaling}: at these recall levels the LLM's semantic prior is on net destructive of the CF prior, and telling the model what CF believed mostly teaches it to stay put.

The fusion results close the remaining gap the same way.
No ensemble significantly beats CF anywhere: reciprocal-rank fusion with the score-aware LLM reaches $.0050$ on Beauty ($-1.7\%$, $p{=}.317$), $.0101$ on Movies ($+2.6\%$, $p{=}.345$) and $.0033$ on Electronics ($+1.5\%$, $p{=}.317$), and the only significant fusion cell is Beauty plain$+$CF RRF, significantly \emph{worse} than CF ($-27.5\%$, $p{=}.043$).
Tuning $\lambda$ on the evaluation users---an upper bound, not a clean held-out result, reported only to bound the possible gain---selects a CF-dominant blend ($\lambda{=}0.7$--$0.8$ Beauty, $0.4$ Movies) and recovers exactly CF on Beauty ($.0051$) and $+8\%$ on Movies (n.s.).
The best LLM$+$CF combination available with hindsight is approximately CF itself.

\textbf{On $\eta$ across datasets.}
Against its own $30$-item window the LLM's utilisation on Beauty is $10.0\%$ (plain) and $13.3\%$ (score-aware), consistent with Table~\ref{tab:models}. The same ratio reads $82$--$87\%$ on Movies and $36$--$38\%$ on Electronics, but Recall@$30$ there is $1.11\%$ and $0.90\%$, so those figures rest on $5$--$6$ users with a retrieved positive and are too noisy to interpret. We therefore quote $\eta$ only on Beauty, and note that reliable cross-dataset $\eta$ needs a far larger evaluation sample.

\subsection{Sequential Models}\label{sec:sequential}
SASRec~\cite{kang2018self} and BERT4Rec~\cite{sun2019bert4rec}, designed to capture temporal dynamics, underperform CF-SVD for retrieval on two of three primary Amazon datasets at $n{=}500$ (Beauty: BERT4Rec $-$56\% $p{=}.019$, SASRec $-$99\% $p{<}.001$; Movies: BERT4Rec $-$34\% $p{=}.048$, SASRec $-$47\% $p{=}.015$).
Temporal modeling does not help when the fundamental bottleneck is interaction sparsity, not sequential pattern.

\subsection{Candidate Budget: Recall Rises, NDCG Does Not}\label{sec:ksweep}
Reporting recall at a single $K$ invites the objection that $K{=}100$ is an unlucky choice.
Table~\ref{tab:ksweep} sweeps the candidate budget over $K\in\{10,20,50,100,200\}$ on Beauty with GPT-4o-mini at $n{=}500$.
Recall@$K$ rises roughly linearly with $K$, from 0.8\% to 12.9\%, exactly as the ceiling predicts---and LLM NDCG@10 moves the other way, falling monotonically relative to CF from $-13\%$ at $K{=}10$ to $-54\%$ at $K{=}200$.
A larger candidate budget raises the ceiling and simultaneously dilutes the reranker's attention across more distractors, so the two effects cancel and then reverse.
No budget in the sweep produces a gain over CF, and the ranking of the two methods is unchanged at every $K$.
A controlled GT-injection experiment on Beauty separately confirms the linear NDCG-vs-recall relationship of Theorem~\ref{thm:ceiling}, with the realistic operating regime (2--9\% recall) clearly inside the low-NDCG band.

\begin{table}[t]
\caption{Candidate-budget sweep on Beauty (GPT-4o-mini, $n{=}500$, seed=42). Recall@$K$ grows with the budget; the LLM's deficit against CF grows with it too. CF NDCG@10 is $.0051$ throughout, since CF's own top-10 does not depend on $K$.}\label{tab:ksweep}
\small
\setlength{\tabcolsep}{4pt}
\begin{tabular}{lccccc}
\toprule
$K$ & 10 & 20 & 50 & 100 & 200 \\
\midrule
Recall@$K$   & 0.82\% & 1.90\% & 4.71\% & 8.20\% & 12.86\% \\
LLM NDCG@10  & .0044 & .0036 & .0026 & .0026 & .0023 \\
vs.\ CF      & $-$12.6\% & $-$27.9\% & $-$48.4\% & $-$48.1\% & $-$54.4\% \\
$p$          & .237 & .343 & .128 & .263 & .069 \\
\bottomrule
\end{tabular}
\end{table}

\subsection{Hybrid Retrieval (BM25 $+$ Dense $+$ RRF)}\label{sec:hybrid_retrieval}
\emph{This subsection continues the RQ2 thread of \S\ref{sec:no_retrieval} and \S\ref{sec:crossdomain}: can stronger retrieval raise the ceiling? Here we add text-aware retrieval to the CF-only retrievers tested earlier.}
A natural objection is that CF-SVD discards item-text signal that an LLM-aware pipeline could exploit upstream.
We test whether hybrid retrieval raises the ceiling.
On an independently resampled Beauty/Movies/Electronics $n{=}500$ evaluation sample (same seed=42 but per-experiment user-draw; CF-SVD baseline differs from the canonical sample by $\le1.4$pp), we evaluate \textbf{BM25} (Okapi over item titles with concatenated last-20 history titles as query), \textbf{Dense} (\texttt{all-MiniLM-L6-v2} sentence embeddings, mean of last-20 history embeddings as user vector), and reciprocal-rank fusions ($k_{\text{rrf}}{=}60$~\cite{cormack2009reciprocal}) with CF-SVD.

On this sample CF-SVD alone reaches Recall@100 of 6.80\%/2.80\%/2.00\% (Beauty/Movies/Electronics; ${\le}1.4$pp from the canonical numbers due to independent resampling, \S\ref{sec:setup}).
Neither text channel is competitive alone---BM25 reaches 5.00\%/0.40\%/0.80\% and Dense 6.20\%/0.60\%/0.20\%---and the best fusion, RRF$\{$CF, BM25$\}$, lifts Beauty to 7.80\% (a 15\% relative gain) while falling to 1.80\%/1.40\% on Movies/Electronics; adding Dense does not help further.
Hybrid retrieval therefore improves the smallest catalog modestly but cannot approach the ${\approx}20\%$ recall an end-to-end NDCG@10 of $0.02$ would require, and on the two large catalogs the text channels are noise-dominated and fusion underperforms CF alone.

\subsection{Learned Cascade Fusion}\label{sec:learned_fusion}
A LambdaMART cascade over twelve per-candidate features---CF score and rank, log-popularity, CF cosine similarity and co-occurrence, history length, BM25 and MiniLM dense scores, and their normalised ranks---trained under the disjoint split of \S\ref{subsec:neural_rerankers} lifts Beauty NDCG@10 from $.0051$ to $.0080$ ($+57\%$, $p{=}.26$), which is $\eta{=}9.7\%$ of the closed-candidate bound: even the best learned fusion explores under a tenth of the available headroom.
On Movies it \emph{degrades} NDCG by 44\% ($p{=}.42$), directionally matching the LoRA-fine-tuned LLM ($-39.5\%$; \S\ref{subsec:neural_rerankers}) and V4-Pro-think ($-43\%$; \S\ref{sec:frontier_reasoning}); the Electronics comparison is degenerate here (CF NDCG@10${\approx}0$ on this resample) and Table~\ref{tab:neural} gives the non-degenerate baseline.
Feature importance is dominated by CF-derived signals on all three datasets, the best text feature ranking only 5th (Dense, Beauty) and 3rd (BM25, Movies).
A learned cascade does not break the ceiling.

\subsection{Closed-Catalog Robustness}\label{sec:closed_catalog}
A natural concern is that the low CF-SVD Recall@100 on Amazon partially reflects an open-catalog artefact: some held-out test items appear nowhere in the training-split catalog and are therefore \emph{unrecoverable by construction}.
Partitioning an independently resampled $n{=}500$ sample (\S\ref{sec:hybrid_retrieval} protocol) by whether the held-out positive appears in the training catalog, the open-catalog fractions are 9.0\%/12.8\%/17.6\% on Beauty/Movies/Electronics, and restricting to closed-catalog users lifts Recall@100 by only 0.5--0.7pp (to 7.47\%/3.21\%/2.43\%).
Even there, Recall@100 stays under 8\% and CF NDCG@10 under $.006$: the ceiling is a property of CF retrieval at this catalog scale, not of the open-catalog setting.

\subsection{Multi-Positive Last-5 Evaluation}\label{sec:multipos}
A natural concern is that LOO ($|Y_u|{=}1$) over-concentrates the bound on a single held-out item.
We test the multi-positive specialisation (Eq.~\ref{eq:multipos}) on Beauty/Movies/Electronics with the last five chronological interactions per user as positives ($n{=}500$, $K{=}100$), against an injection-oracle that forces all five positives into $C_u$ before CF reranks.
Recall@100 rises only slightly with five positives instead of one (9.21\%/3.93\%/2.83\%), and realistic NDCG@10 stays under $0.014$ ($.0138$/$.0058$/$.0059$) against injection-oracle values of $.0361$/$.0071$/$.0125$; Hit@10 and MAP@10 show the same pattern, and full per-metric breakdowns with CIs are in the released repository.

\subsection{Statistical Power}\label{subsec:power}
At $n{=}500$ Wilcoxon power is ${>}99.9\%$ against $d{=}0.4$ and ${\approx}80\%$ against $d{=}0.15$, while observed Amazon effects are $d{\leq}0.05$. Since 92--97\% of users tie at $0$, effective $n_{\text{nonzero}}$ is 15--40: power is high within the non-tied tail and essentially none against zero-to-nonzero shifts, which require better retrieval. An $n{=}1{,}996$ Movies validation is consistent ($-2.8\%$, $p{=}.091$); we make no claim about $d{<}0.10$.
\section{Implications and RAEP}\label{sec:implications}
\textbf{RQ5:} Can recall-aware evaluation diagnose when reranking is meaningful, and where should the field invest next (\S\ref{sec:raep})?

\subsection{What the Ceiling Means for Practice}\label{sec:practice}

At $r{=}0.08$ under LOO, $92\%$ of users have no relevant item in $C_u$ at all, and the remaining $8\%$ face a low-probability identification task on a set that already passed CF's prior. A cost audit sharpens the point: no LLM beats the \$0.00 CF baseline on Beauty, and Llama-3.3-70B (\$0.40/1K) is significantly worse on Movies ($p{=}.006$). \textbf{Takeaway.} Measure retrieval recall first; do not scale models under low recall (4B matches 671B here); A/B-test rather than trust offline results at $n{=}500$.

\subsection{The Recall-Aware Evaluation Protocol (RAEP)}\label{sec:raep}

We propose RAEP as a minimum methodological standard for future LLM-rerank work, with four components:
(\emph{1}) \textbf{Recall Diagnosis}, classifying the regime as CRITICAL ($r{<}5\%$), LOW ($5$--$15\%$), MODERATE ($15$--$30\%$), or ADEQUATE ($r{\geq}30\%$);
(\emph{2}) \textbf{RA-NDCG}, $\text{NDCG@}k(u)/\text{NDCG}^*_k(u)$, isolating reranker skill from retrieval;
(\emph{3}) \textbf{Ceiling utilisation $\eta$}, the fraction of available headroom a reranker exploits, scored against the reranker's own candidate window and reported only where the retrieved-positive subsample supports it (below $16\%$ on Beauty; too few users to estimate on Movies and Electronics);
(\emph{4}) \textbf{Adaptive-$K$}, per-user candidate sizing from the CF score distribution, $K{\in}[50,500]$.
Applied to our datasets, RAEP places Electronics ($r{=}2.24\%$) and Movies ($r{=}2.95\%$) in CRITICAL, Beauty ($r{=}8.23\%$) in LOW, and MovieLens-25M ($r{=}18.9\%$) in MODERATE. Adaptive-$K$ lifts Recall@$K$ by 69--112\% at 2--3$\times$ budget, but leaves NDCG@10 indistinguishable from fixed $K{=}100$ ($p\in\{.17,.54,\text{n/a}\}$): it is a diagnostic, not an NDCG-improvement pipeline. A retrospective audit of LLMRank~\cite{hou2024large} places its Amazon-Games ($K{=}20$, $N{=}16{,}859$) in CRITICAL and MovieLens-1M at the LOW edge, so gains measured there may not extrapolate to deployment-scale catalogs.

\subsection{The Only Escape Route Our Framework Permits}\label{sec:escape}

Theorem~\ref{thm:ceiling} is a statement about \emph{closed-candidate} rerankers: any reranker whose output is restricted to permutations of a retrieved set $C_u$. Three directions could in principle break the ceiling, but only one breaks the closed-candidate assumption itself. \emph{Better retrieval}~--- multi-source fusion at $K{=}3{,}000$ reaches 30\% recall on Beauty at $30\times$ candidate budget; domain-pretrained dense retrievers (E5-large, BGE-large with in-domain tuning) are a more promising lever~--- raises the ceiling but operates inside the same combinatorial constraint. \emph{Richer metadata}~--- on Movies a metadata-enrichment pilot at $n{=}200$ lifts NDCG@10 by 40\% but still sits 79\% below CF~--- improves $\eta$ but not the ceiling. \emph{Generative retrieval}~--- end-to-end token-level item generation (TIGER, GenRec)~--- is the only direction in our taxonomy that lets the model emit items absent from $C_u$, breaking the closed-candidate assumption of Theorem~\ref{thm:ceiling}. We therefore identify generative retrieval as the natural follow-up: a head-to-head against our hybrid CF$+$BM25$+$Dense baseline at the recall--NDCG operating point identified in this work.

\subsection{What the Oracle Protocol Is Still Good For}\label{sec:oracle_value}
Our critique is of a specific inference, not of the protocol itself.
Holding the candidate set fixed and guaranteeing the positive is present is a legitimate way to compare rerankers to \emph{each other}: it isolates ranking skill from retrieval variance, and it is the only setting in which prompt and model differences are measurable at all in our data (\S\ref{sec:prompts}).
What it cannot support is an absolute claim about deployment performance, because the quantity it holds fixed is exactly the one that binds in production.
Reporting an oracle number alongside the realistic number and the recall that separates them costs one column and removes the ambiguity.

\smallskip\noindent\textbf{Limitations and conclusion.}
We do not solve retrieval; the density model ($R^2{=}0.39$, $n{=}8$) is descriptive, and $n{=}500$ cannot detect $d{<}0.10$ effects.
Three boundaries deserve emphasis.
Our retrievers reach Recall@100 of 2--19\%; a production system operating at substantially higher recall sits outside the regime we measured, and the ceiling loosens as recall rises.
Our rerankers see CF embeddings, item text, popularity, and CF scores, but not the dwell-time, context, and cross-feature signals an industrial ranker consumes.
And offline evaluation cannot observe the online feedback loop through which a reranker's output changes future training data.
Within those boundaries, a theorem-backed bound, $\eta$ measurements on every tested reranker, and a five-dimension falsification point to one conclusion: under the retrieval we can achieve, no LLM reranking strategy we tested beats CF, and the binding constraint is retrieval rather than reranker optimisation.

\section*{GenAI Usage Disclosure}
An LLM coding assistant (Claude Code, Anthropic) was used to help write and refactor experiment and analysis code, and to copy-edit and condense the manuscript.
The LLMs evaluated in \S\ref{sec:results}--\S\ref{sec:falsification} are the object of study, not authoring tools, and are documented in \S\ref{sec:setup}.
All code, results, and claims were verified by the author, who takes full responsibility for the content.

\bibliographystyle{ACM-Reference-Format}
\balance
\bibliography{references}

\end{document}